\newcommand{\drawColoredWord}[5]{
\FPeval{x}{(#1)}
\FPeval{y}{(#2)}
\FPeval{wordLength}{(#3)}
\FPeval{z}{(\x+\wordLength)}
\FPeval{t}{(\y+.5)}
\FPeval{u}{(((\x)+(\z))/2)}
\FPeval{v}{(\y+.25)}
\draw[fill=#5!30] (\x,\y) rectangle (\z,\t);
\draw (\u,\v) node {#4};
}
\def\sa#1{\mbox{\tt #1}}
\def\sh1{\textit{sh1}}
\def\pp{\mathinner{\ldotp\ldotp}}	
\def\suff{\mathit{suff}}
\def\lsc{\mathit{lcs}}
\def\conds{\textit{Cond-suf}}
\def\condo{\textit{Cond-occ}}
\def\bsuff{\textit{good-suff}}
\def\suf{\textit{suff}}
\def\alph{\textit{alph}}
\begin{document}

\title{A fast implementation of the good-suffix array for the Boyer-Moore string matching algorithm}

\author{Thierry Lecroq\footnote{CURIB,
UFR des Sciences et des Techniques,
Université de Rouen Normandie,
76821 Mont-Saint-Aignan Cedex,
France,
\protect\url{Thierry.Lecroq@univ-rouen.fr},
\protect\url{monge.univ-mlv.fr/~lecroq/lec_en.html}
}
}

\date{Univ Rouen Normandie, INSA Rouen Normandie, Université Le Havre Normandie, Normandie Univ, LITIS UR 4108, CNRS NormaSTIC FR 3638, IRIB, F-76000 Rouen, France}

\maketitle

\begin{abstract}
String matching is the problem of finding all the occurrences of a
 pattern in a text.
It has been intensively studied and the Boyer-Moore string matching
 algorithm is probably one of the most famous solution to this problem.
This algorithm uses two precomputed shift tables called the good-suffix
 table and the bad-character table.
The good-suffix table is tricky to compute in linear time.
Text book solutions perform redundant operations.
Here we present a fast implementation for this good-suffix table
 based on a tight analysis of the pattern.
 Experimental results show two versions of this new implementation are the
  fastest in almost all tested situations.
\end{abstract}

\section{Introduction}

The string matching problem consists of finding one or more usually
 all the occurrences of a pattern $x=x[0\pp m-1]$ of length $m$ in a text $y=y[0\pp n-1]$
 of length $n$ where both strings $x$ and $y$ are build on the same alphabet.
It can occur in many applications, for instance in information retrieval, bibliographic search and
 molecular biology.
It has been extensively studied and numerous techniques and algorithms
 have been designed to solve this problem
 (see~\cite{CL2004,Faro2011a}).
We are interested here in the problem where the pattern is given first
 and can then be searched in various texts.
This specific problem is called exact online string matching.
Thus a preprocessing phase is allowed on the pattern.

Basically an online string matching algorithm uses a window to scan the text.
The size of this window is equal to the length of the pattern.
It first aligns the left ends of the window and the text.
Then it checks if the pattern occurs in the window (this specific work
 is called an {\em attempt}) and {\em shifts} the window to the right.
It repeats the same procedure again until the right end of the window
 goes beyond the right end of the text.
The many different solutions differ in the way they compare the content
 of the window and the pattern, in the way they compute the lengths of the shifts and
 in the information that are stored from one attempt to the next.

There is no universal algorithm in the sense that the efficiency of a particular algorithm
 mainly depends on the size of the alphabet and the length of the pattern~\cite{FL2010}.
According to~\cite{Faro2011a}, the many algorithms can be classified in character comparison
 based algorithms,
automata based algorithms and bit-parallelism based algorithms.

The Boyer-Moore string matching
 algorithm~\cite{BM77}, which is a character comparison based algorithm,  is probably one of the most famous solution to this problem.
It has inspired a lot of subsequent works (\cite{Gal79,Hor80,AG86,HS91,Lec92} just to cite a few).
This algorithm uses two precomputed shift tables called the good-suffix
 array and the bad-character array.
The good-suffix table is tricky to compute in linear time.
Actually the original paper by Boyer and Moore did not specify how to compute it.
In~\cite{KMP77}, the proposed algorithm is not correct in all cases.
The first correct algorithm was given by Rytter~\cite{Ryt80}.
Text book solutions perform redundant operations.
It uses another table, called $\suff$ that gives for every position of the pattern
 the longest suffix of the pattern ending at that position.
The classical method for computing the good-suffix table consists
 of first computing the $\suff$ table and to perform two successive scans
 of this $\suff$ array to compute the good-suffix table that involves three
  loops on the interval $[0;m[$.

Actually it is easy to see that the computation can be done with only two loops.
Redundant operations can be spotted and thus avoided.
But how fast is it?
Experimental results show that it is fastest in almost all
 situations.
It is surprising that more than 50 years after the publication of the Boyer-Moore algorithm
 we can still design new algorithm for computing this good-suffix table
  and it was a lot of fun doing so!
 
The rest of this article is organized as follows.
Section~\ref{sect-pre} presents basic notions and notations on strings.
Section~\ref{sect-classic} presents the classical computation of the good-suffix table.
Section~\ref{sect-ft} give three new methods for computing this table.
In Section~\ref{sect-expe} experimental results are presented and
 we give our conclusions in Section~\ref{sect-conc}.

\section{\label{sect-pre}Preliminaries}

An {\em alphabet} is a finite set of elements called {\em letters} or
 {\em symbols}.
A {\em string} is a sequence of zero or more symbols from an
 alphabet $\Sigma$ of size $\sigma$; the string with zero symbols is denoted by
 $\varepsilon$.
The set of all strings over the alphabet $\Sigma$ is
 denoted by $\Sigma^*$.
A string $x$ of length $n$ is represented
 by $x[0\pp n-1]$, where $x[i]\in\Sigma$ for $0\le i\le n-1$.
The subset of $\Sigma$ of letters occurring in a string $x$ is
 denoted by $\alph(x)$.
A string $u$ is a {\em prefix} of $x$ if $x=uw$ for $w\in\Sigma^*$.
Similarly, $u$ is a {\em suffix} of $x$ if $x=wu$ for 
 $w\in\Sigma^*$.
A string $u$ is a {\em border} of $x$ if $u$ is both a prefix and a
 suffix of $x$ and $u \neq x$.
 An integer $p$ such that $1\le p \le |x|$ is a {\em period} of a string $x$
 if $x[i] = x[i+p]$ for every position $i$ such that $0\|e i < |x|-p$.
Borders and periods are dual notions since if $x=uv$ and $u$ is
 a border of $x$ then $|v|$ is a period of $x$.
Furthermore during the run of a exact online string matching algorithm
 the length of the optimal shift to apply when an occurrence of a pattern is found
 is given by the shortest period (usually called {\em the border}) of the pattern.

\section{\label{sect-classic}Classical computation}

When an attempt $T$ takes place at right
 position $j$ on the text $y$, the window
 contains the factor $y[j-m+1 \pp j]$ of the text $y$. 
The index $j$ is thus the right position of the factor of the text contained in the window.
The longest common suffix of two strings $u$ and $v$ being denoted by
 $\lsc(u,v)$
 for an attempt $T$ at position $j$ on the text $y$, we set
 $z = \lsc(y[0 \pp j],x)$
  and $d$ the length of the
  shift
  applied just after the attempt $T$. 

The general situation at the end of the attempt $T$ is the following:
 the suffix $z$ of $x$ has been identified in the text $y$ and, if $|z|<|x|$,
 a negative comparison occurred between the letter
 $a=x[m-|z|-1]$ of the pattern and the letter $b=y[j-|z|]$ of the text.
In other words, by setting $i=m-|z|-1$,
 we have $z=x[i+1\pp m-1]=y[j-m+i+2\pp j]$ and, 
 either $i=-1$,
 or $i\geq0$ with $a=x[i]$, $b=y[j-m+i+1]$ and $a\ne b$
 (see \figurename{}~\ref{figu-situ-bm}).

Taking into account the information collected on the text $y$ during
 the attempt, the natural shift to apply consists in aligning the
 factor $z$ of the text with its rightmost occurrence in $x$ preceding by a letter different from $a$. 
If there is no such occurrence, the shift should take into account the longest prefix of $x$ that is
 also a suffix of $z$. 

In the two situations that have just been examined, the computation of the
 shift following $T$ is independent of the text.
It can be previously computed for each position of the pattern.
To this aim, we define two conditions that correspond to the case where
 the string $z$ is the suffix $x[i+1 \pp m-1]$ of $x$.
They are the \textit{suffix condition}
 and the \textit{occurrence condition}.
They are defined, for every position $i$ on $x$, every shift
 $d$ of $x$ and every letter $b\in\Sigma$, by:
\[
 \conds(i,d)=\begin{cases}
  0<d \leq i+1 \mbox{ and } x[i-d+1 \pp m-d-1] \mbox{ is a suffix of } x \cr
  \mbox{or} \cr
  i+1<d \mbox{ and } x[0 \pp m-d-1] \mbox{ is a suffix of } x
 \end{cases}
\]
 and
\[
 \condo(i,d)=\begin{cases}
  0<d \leq i \mbox{ and } x[i-d]\neq x[i] \cr
  \mbox{or} \cr
  i<d.
 \end{cases}
\]
Then, the
 \textit{good-suffix} table, denoted by
 $\bsuff$, is defined in the following
 way: for every position $i$ on $x$,
$$
\bsuff[i] = \min \{ d \mid \conds(i,d) \mbox{ and } \condo(i,d)
 \mbox{ hold}\}
$$
for $0\leq i \leq m-1$.

The computation of the good-suffix table is not straightforward.
One method consists in using a table called $\suf$ defined as follows:
$$\suf[i] = |\lsc(x, x[0\pp i])|$$
for $0\leq i \leq m-1$.

The relation between the table $\bsuff$ and the table $\suf$ is the following

$$\bsuff[i] = \min\{m-1-j \mid m-1 - \suf[j] = i\}$$

for $0\leq i \leq m-1$ (see \figurename{}~\ref{figu-rel-tables}) when $z=x[i+1\pp m-1]$
 reoccurs in $x$
 otherwise
$$\bsuff[i] \leq \min\{m-1-j \mid \suf[j] = j+1 \mbox{ and } \suf[j] < m-1-i\}$$
for $0\leq i \leq m-1$ (see \figurename{}~\ref{figu-rel-tables2}) when $z=x[i+1\pp m-1]$
 does not reoccur in $x$.
The first condition $\suf[j] = j+1$ means that $x[0\pp j]$ is both a prefix of $x$ and a suffix of $x$
 thus a border and then $m-1-j$ is a period of $x$.
The second condition $\suf[j] < m-1-i$ means that the border is shorter than $z=x[i+1\pp m-1]$
 (the shift cannot take into account information about suffixes of $x$ longer than $z$).
It can be noticed that $\bsuff[0]$ is equal to the period of $x$ and can be used to shift the window
 when an occurrence of the pattern has been found.

Table $\suf$ can be computed by the algorithm
 \Algo{Suffixes} given in \figurename{}~\ref{figu-algo-suffixes}
  and table $\bsuff$ can be computed by the algorithm
 \Algo{Suffixes} given in \figurename{}~\ref{figu-algo-good-suffixes}.

Both algorithms \Algo{Suffixes} and \Algo{Good-Suffixes} runs in linear time
 (see~\cite{CHL2007} for the detailed correctness and complexity analysis).
Algorithm \Algo{Suffixes} computes the values of the table $\suf$ from right to left.
When computing $\suf[i]$, for $0\le i \le m-2$, all values
 $\suf[j]$ have already been computed for $i < j <m$.
It keeps two variables $f$ and $g$ such that
 $i<f$ and $g=\suf[f]-f$ is minimal.
Thus if $g<i$ and if $\suf[m-1-f+i]\ne i-g$
 then $\suf[i] = \min\{\suf[m-1-f+i], i-g\}$ without
 having to compare letters.
Only when $\suf[m-1-f+i]= i-g$ comparing letters is necessary
 to compute $\suf[i]$.
This mainly contributes to the linear time complexity of Algorithm \Algo{Suffixes}.
The linearity of the computation of table $\bsuff$ given table $\suf$
 is straightforward.

An example of computation of the $\bsuff$ table
 is given in \figurename{}~\ref{figu-example1}.

\section{\label{sect-ft}Improved computation}

Algorithm \Algo{Good-Suffixes} performs two scans of the table $\suf$ in order
 to compute the table $\bsuff$ after that one scan was necessary to compute table $\suf$.
We will show that both tables can be computed in the same time with only two scans of the interval
 $[0;m[$ where $m$ is the length of the pattern.
Let us first remark that for any string $x$ of length $m$, $\suf[m-1]$ is equal to $m$ and this value
 is never used in the computation of $\bsuff$ thus we can avoid to compute it.
From now on, let us denote by $a\in\Sigma$ the last letter of $x$.
Let us then observe that for every $i$ on $x$ such that $0\le i \le m-2$, 
 if $x[i]\ne a$ then $\suf[i]=0$ and only the rightmost such position
 is used in the computation of $\bsuff$.
Let us assume that $x$ contains at least two distinct symbols ($|\alph(x)| \ge 2$).

Then the proposed improved computation of the table $\bsuff$ will consists in
 computing the tables $\suf$ and $\bsuff$ together with a single scan of $x$
 from right to left and dealing only with occurrences of its last symbol
 (positions $i$ where $x[i]=a$ for $0\le i \le m-2$).

Furthermore, if $x$ ends with a run of $a$ of length $k_1$,
 runs of $a$ of length different from $k_1$ in $x$ will be
 easily handle.

The next lemma shows how to handle the positions of the last run of $a$
 (i.e. the run that ends $x$, see \figurename{}~\ref{figu-lemma1}).

\begin{lemma}\label{lemma-lastrun}
Let $x=x_1b_1a^{k_1}$ for $x_1\in \Sigma^*$, $a,b_1\in \Sigma$, $b_1\ne a$ and an integer $k_1 \ge 1$.
Let $\ell_1=|x_1b_1|$ and $r_1=|x_1b_1a^{k_1}|-2=m-2$, then
\begin{enumerate}
\item
$\suf[i] = i-\ell_1+1$;
\item
$\bsuff[i] = i-\ell_1+1$;
\item
$\bsuff[m-1] = k_1$;
\end{enumerate}
for $\ell_1 \le i \le r_1$.
\end{lemma}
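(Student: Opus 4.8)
The plan is to argue directly from the definitions of $\suf$, $\bsuff$, $\conds$ and $\condo$, exploiting the explicit block structure $x = x_1 b_1 a^{k_1}$. First I would fix the indexing: since $m = \ell_1 + k_1$, the letter $b_1$ sits at position $\ell_1 - 1$, the final run of $a$'s occupies positions $\ell_1,\dots,m-1$, and the range $\ell_1 \le i \le r_1 = m-2$ singles out every position of that run except the last. The fact I would use repeatedly is $x[j] = a$ for $\ell_1 \le j \le m-1$ together with $x[\ell_1-1] = b_1 \ne a$. For part (1), I would write the relevant prefix explicitly as $x[0\pp i] = x_1 b_1 a^{\,i-\ell_1+1}$ and compare it with $x = x_1 b_1 a^{k_1}$ from the right: the two agree on their last $i-\ell_1+1$ letters (all equal to $a$, legitimate because $i-\ell_1+1 \le k_1-1 < k_1$) and then disagree, the next letter being $b_1$ in the prefix but $a$ in $x$. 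Hence $\lsc(x,x[0\pp i]) = a^{\,i-\ell_1+1}$ and $\suf[i] = i-\ell_1+1$.

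For part (2), I would set $d = i-\ell_1+1$, verify that this $d$ satisfies both conditions, and then that no smaller positive shift does. Since $1 \le d \le i$, the occurrence condition asks for $x[i-d] \ne x[i]$; here $i-d = \ell_1-1$, so $x[i-d] = b_1 \ne a = x[i]$. For the suffix condition, $x[i-d+1 \pp m-d-1] = x[\ell_1 \pp m-i+\ell_1-2]$ is a block of $a$'s of length $m-i-1 \le k_1-1$, hence a suffix of $x$. For minimality I would take any $d'$ with $1 \le d' < d$: then $\ell_1 \le i-d' \le i-1 \le m-3$, so $x[i-d'] = a = x[i]$ and $\condo(i,d')$ fails. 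Thus $d = i-\ell_1+1$ is the least shift satisfying both conditions, giving $\bsuff[i] = i-\ell_1+1$.

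Part (3) is the case $i = m-1$, where the matched suffix is empty. The key observation is that $\conds(m-1,d)$ holds for every $d > 0$: the strings $x[m-d \pp m-d-1]$ (first branch) and $x[0 \pp m-d-1]$ (second branch) are empty whenever relevant, and the empty string is a suffix of $x$. So $\bsuff[m-1]$ reduces to the least $d > 0$ satisfying $\condo(m-1,d)$. For $1 \le d < k_1$ we have $\ell_1 \le m-1-d \le m-2$, so $x[m-1-d] = a = x[m-1]$ and the condition fails; for $d = k_1$ we get $x[m-1-k_1] = x[\ell_1-1] = b_1 \ne a$, so it holds (and $k_1 \le m-1$). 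Hence $\bsuff[m-1] = k_1$.

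I expect the main obstacle to be the minimality arguments in parts (2) and (3): the existence half is a direct substitution, but one must argue cleanly that every intermediate shift lands on a letter of the final run and is therefore rejected by $\condo$. The boundary bookkeeping for $i = m-1$ — recognising that the suffix condition degenerates to the empty string and so imposes no constraint, leaving $\condo$ as the sole determinant — is the other place where care is needed.
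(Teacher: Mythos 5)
Your proof is correct and follows essentially the same route as the paper's: part (1) by locating the longest common suffix via the block structure, part (2) by verifying $\conds$ and $\condo$ at $d=i-\ell_1+1$ and killing every smaller shift through $\condo$, and part (3) by noting the suffix condition degenerates so only $\condo$ matters, which first holds at $d=k_1$. Your treatment of part (3) is in fact slightly more explicit than the paper's (which checks $\conds(m-1,k_1)$ but leaves the verification of $\condo(m-1,k_1)$ implicit), but the argument is the same.
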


\begin{proof}
For $\ell_1 \le i \le r_1$, $i-\ell_1+1 < k_1=r_1-\ell_1+2$.

\begin{enumerate}
\item
Let $k=i-\ell_1+1 < k_1$. 
Then $x[\ell_1\pp i] = x[m-k\pp m-1] = a^k$ is a suffix of $x$ and $x[\ell_1-1]=b_1\ne a = x[m-1-k]$,
 thus $\suf[i] = k$;
\item
$\conds(i,k)$ holds since $0<k\le i+1$ and $x[i-k+1\pp m-k-1] = a^{k_1-k}$ is a suffix of $x$.
$\condo(i,k)$ holds since $0<k\le i$ and $x[i-k] = b_1 \ne a = x[i]$.
For any $0<d<k$, $\condo(i,d)$ does not hold since $x[i-d] = a = x[i]$
 thus $\bsuff[i] = k$;
\item
For any $0<d<k_1$, $\condo(m-1,d)$ does not hold since $x[m-1-d] = a = x[m-1]$
 and 
$\conds(m-1,k_1)$ trivially holds since $0<k_1\le m$ and $x[m-1-k_1+1\pp m-k_1-1] = \varepsilon$ is the empty suffix of $x$.
 thus $\bsuff[m-1] = k_1$.
\end{enumerate}
\end{proof}

The next two lemmas show how to handle the case of subsequent runs of $a$.
We start by runs of length $k_2$ strictly less than $k_1$ (see \figurename{}~\ref{figu-lemma2}).

\begin{lemma}\label{lemma-smallrun}
If $x=x_2b_2a^{k_2}x_3b_1a^{k_1}$ for $x_2,x_3\in \Sigma^*$, $a,b_1,b_2\in \Sigma$, $b_1\ne a$, $b_2\ne a$ and two integers $k_1,k_2 \ge 1$ such that
 $k_2 < k_1$.
Let $\ell_2=|x_2b_2|$ and $r_2=|x_2b_2a^{k_2}|-1$, then
\begin{enumerate}
\item
$\suf[i] = i-\ell_2+1$;
\item
$\bsuff[m-1-\suf[i]] < m-1-i$;
\end{enumerate}
for $\ell_1 \le i \le r_1$.
\end{lemma}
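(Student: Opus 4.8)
The plan is to treat the two parts separately, with part~1 essentially repeating the corresponding step of Lemma~\ref{lemma-lastrun} and part~2 being the genuinely new assertion. Throughout I take $i$ in the range of the run $a^{k_2}$, that is $\ell_2 \le i \le r_2$, and I set $k = i-\ell_2+1$, so that $1 \le k \le k_2 < k_1$.

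For part~1 I would first observe that $x[\ell_2 \pp i] = a^k$ and that, because $k < k_1$, this block $a^k$ is exactly the suffix of $x$ of length $k$; hence the longest common suffix of $x$ and $x[0\pp i]$ has length at least $k$. To see that it is not longer, I would compare the two letters that would extend the match by one: on the pattern side $x[m-1-k] = a$ (still inside the final run, since $k < k_1$), while $x[i-k] = x[\ell_2-1] = b_2 \ne a$. The mismatch forces $\suf[i] = k = i-\ell_2+1$, which is part~1.

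For part~2 the underlying idea is that the shift these positions would contribute to the good-suffix table is overwritten by an identical contribution coming from the final run. Concretely, I would use the reoccurrence form of the relation between the tables, $\bsuff[m-1-s] = \min\{m-1-j \mid \suf[j] = s\}$, with $s = \suf[i] = k$; it then suffices to exhibit a position $j > i$ with $\suf[j] = k$, for then $\bsuff[m-1-\suf[i]] \le m-1-j < m-1-i$. Such a $j$ is produced by applying Lemma~\ref{lemma-lastrun}(1) to the final run, along which $\suf$ takes every value in $\{1,\ldots,k_1-1\}$; since $1 \le k \le k_2 < k_1$, the value $k$ is attained at $j = m-1-(k_1-k)$. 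Because this $j$ lies in the final run it satisfies $j \ge m-k_1 > r_2 \ge i$, the strict inequality we need.

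The step I expect to require the most care is justifying that the \emph{reoccurrence} form of the relation is the one that applies at index $m-1-\suf[i]$. For this I would verify that the suffix $z = x[m-k \pp m-1] = a^k$ genuinely reoccurs preceded by a different letter: at the position $j$ found above the block $a^k$ is preceded by $b_1 \ne a$, whereas in $z$ itself it is preceded by $x[m-1-k] = a$, so the reoccurrence is legitimate and the $\min$ formula is valid. What remains is pure index bookkeeping — in particular checking $m-k_1 > r_2$ directly from $x = x_2b_2a^{k_2}x_3b_1a^{k_1}$, which gives $m-k_1-r_2 = |x_3|+2 > 0$ — and is routine.
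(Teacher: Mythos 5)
Your proof is correct, and the combinatorial fact it exploits in part~2 is the same one the paper uses, but the justification is routed differently. Part~1 is the paper's argument verbatim (and you rightly read the lemma's range ``$\ell_1\le i\le r_1$'' as the typo it is, working with $\ell_2\le i\le r_2$ as the paper's own proof does). For part~2, both you and the paper ultimately point at the same reoccurrence: the suffix $a^k$ of $x$ also occurs at positions $m-k_1,\dots,m-k_1+k-1$ at the start of the final run, preceded there by $b_1\ne a$, giving the shift $k_1-k$. The paper verifies this directly against the definition of $\bsuff$, checking $\conds(m-1-k,k_1-k)$ and $\condo(m-1-k,k_1-k)$, whence $\bsuff[m-1-k]\le k_1-k<m-1-i$. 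You instead invoke the reoccurrence form of the $\suf$/$\bsuff$ relation from Section~\ref{sect-classic}, $\bsuff[m-1-s]=\min\{m-1-j\mid \suf[j]=s\}$, and use Lemma~\ref{lemma-lastrun} to exhibit $j=m-1-(k_1-k)$ with $\suf[j]=k$ and $j\ge m-k_1>r_2\ge i$. Both routes are sound. Yours buys two small things: the strict inequality $\bsuff[m-1-\suf[i]]<m-1-i$ falls out immediately from $j>i$ (the paper's final line actually weakens to ``$\le m-1-i$'' even though its own bound is strict), and it reuses Lemma~\ref{lemma-lastrun} instead of re-deriving the suffix property by hand. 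The cost is self-containedness: the relation you rely on is stated in Section~\ref{sect-classic} without proof, and unwinding it at your position $j$ reproduces exactly the $\conds$/$\condo$ verification the paper performs; your check that $\suf[j]$ equals $k$ \emph{exactly} (because the occurrence is preceded by $b_1\ne a$ while $x[m-1-k]=a$) is precisely what makes the appeal to that relation legitimate, so no gap remains.
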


\begin{proof}
For $\ell_2 \le i \le r_2$, let $k=i-\ell_2+1$.

\begin{enumerate}
\item
Then $x[\ell_2\pp i] = x[m-k\pp m-1] = a^k$ is a suffix of $x$ and $x[\ell_2-1]=b_2\ne a = x[m-1-k]$,
 thus $\suf[i] = k$;
\item
$\conds(m-1-k,k_1-k)$ holds since $0<k_1-k\le m-k$ and $x[m-1-k-(k_1-k)+1\pp m-(k_1-k)-1]=x[m-k_1\pp m-k_1+k-1] = a^k$ is a suffix of $x$.
$\condo(m-1-k,k_1-k)$ holds since $0<k_1-k\le m-1-k$ and $x[m-1-k-(k_1-k)] = x[m-k_1-1] = b_1 \ne a = x[m-1-k]$
 and $k _1-k< m-1-i$ thus $\bsuff[m-1-k] \le m-1-i$;
\end{enumerate}
\end{proof}

The next lemma shows how to handle runs of length greater or equal than $k_1$
  (see \figurename{}~\ref{figu-lemma3}).

\begin{lemma}\label{lemma-largerun}
If $x=x_4b_4a^{k_3}x_5b_1a^{k_1}$ for $x_4,x_5\in \Sigma^*$, $a,b_1,b_4\in \Sigma$, $b_1\ne a$, $b_4\ne a$ and two integers $k_1,k_3 \ge 1$ such that
 $k_3 > k_1$.
Let $\ell_3=|x_4b_4|$ and $r_3=|x_4b_4a^{k_3}|-1$, and let $e = \ell_3+k_1-1$ then:
\begin{enumerate}
\item
$\suf[i] = k_1$ for $f<i\le r_3$;
\item
$\suf[f] \ge k_1$;
\item
$\suf[i] = i-\ell_3+1$ for $\ell_3\le i < e$;
\item
if $k_3>k_1$ then $\bsuff[m-1-\suf[r_3]] \le m-1-r_3$;
\item
$\bsuff[m-1-\suf[e]] \le m-1-e$;
\item
$\bsuff[m-1-\suf[i]] < m-1-i$ for $e<i<r_3$ and for $\ell_3 \le i < f$
\end{enumerate}
\end{lemma}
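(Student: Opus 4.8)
The plan is to handle each of the six claims by the same two-step recipe that Lemmas~\ref{lemma-lastrun} and~\ref{lemma-smallrun} already use: first pin down the value of $\suf[i]$ by exhibiting the matching suffix $a^k$ and the mismatch before it, and then verify the two conditions $\conds$ and $\condo$ at the relevant position to bound $\bsuff$. The overall picture is a run of $a$'s of length $k_3>k_1$ occupying positions $\ell_3\le i\le r_3$, preceded by the letter $b_4\ne a$ at position $\ell_3-1$. The key index is $e=\ell_3+k_1-1$, the position inside this run that sits exactly $k_1$ letters from its left end, so that reading $k_1$ symbols to the right of $e$'s left neighbourhood aligns with the trailing run $a^{k_1}$ of $x$.

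For the $\suf$ claims (items 1--3), I would argue as follows. For a position $i$ with $\ell_3\le i<e$, fewer than $k_1$ copies of $a$ precede and including $i$ inside the run, so the longest common suffix is exactly $a^{i-\ell_3+1}$, bounded above by the mismatch $x[\ell_3-1]=b_4\ne a$; this gives item~3, exactly as in Lemma~\ref{lemma-smallrun}(1). For $i>e$ the run is long enough that the common suffix saturates at the full trailing block $a^{k_1}$ (the letter preceding that block in $x$ is $b_1\ne a$, forcing the value to stop at $k_1$), giving $\suf[i]=k_1$ for item~1. At the boundary position $f$ (which I read as $f=e$, the position where the two regimes meet) the common suffix is at least $k_1$, which is item~2; the inequality rather than equality accounts for the possibility that the letter at $\ell_3-1-\text{(something)}$ continues to match, i.e.\ $x_4$ may itself end in $a$'s that extend the border. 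The slightly delicate point here is being careful about whether $e$ belongs to the ``short'' or ``long'' regime and why $\suf[e]$ can exceed $k_1$, which is precisely what item~2 records.

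For the $\bsuff$ bounds (items 4--6) the strategy is to show that each position $i$ in this run reoccurs as a suffix but \emph{with a strictly shorter usable shift}, so that the witness $j=i$ contributes the value $m-1-i$ (or less) to the minimum defining $\bsuff[m-1-\suf[i]]$. Concretely, at the position $m-1-\suf[i]$ I would check $\conds$ and $\condo$ for a shift of size $k_1-\suf[i]$ or similar, mirroring the computation in Lemma~\ref{lemma-smallrun}(2): the suffix $a^{\suf[i]}$ reappears, and the mismatch is provided by $b_1$ just before the final run. Item~4 is the special endpoint $r_3$ under $k_3>k_1$, item~5 is the crossover point $e$, and item~6 collects the two open ranges $e<i<r_3$ and $\ell_3\le i<f$, each handled by the generic reoccurrence argument. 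The main obstacle, and where I would spend the most care, is item~6 combined with items~4--5: I need to confirm that the strict inequality ``$<$'' (rather than ``$\le$'') genuinely holds on the open ranges, which amounts to verifying that the mismatching letter to the left of the reoccurring suffix is indeed different from $a$ in those cases, and separately that the endpoints $e$ and $r_3$ only warrant the weaker ``$\le$'' because there the mismatch witness may coincide with an end of a run. Untangling exactly which positions give strict versus non-strict bounds, and matching them to the correct sub-ranges of the run, is the crux; the rest reduces to the same suffix-and-mismatch bookkeeping already carried out in the previous two lemmas.
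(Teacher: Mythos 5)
Your handling of the $\suf$ values (items 1--3) is sound and coincides with the paper's argument: exhibit the run $a^k$ ending at $i$ as a common suffix and locate the mismatch ($b_4$ against an $a$ of the final run when $\ell_3\le i<e$; an $a$ of the long run against $b_1$ when $i>e$), and your reading $f=e$ is the intended one. (Your aside that $\suf[e]$ may exceed $k_1$ because ``$x_4$ may itself end in $a$'s'' is incorrect --- $b_4\ne a$ blocks that; the match extends precisely when $b_4=b_1$ --- but since item 2 only needs the exhibited suffix $a^{k_1}$, this remark is harmless.)

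The genuine gaps are in items 4--6. First, item 5: you apply the ``generic reoccurrence argument'' with ``the mismatch provided by $b_1$ just before the final run,'' but at the crossover position $e$ the reoccurring suffix has length $s=\suf[e]\ge k_1$, not $k_1$: when $b_4=b_1$ it extends left into $x_4$, and in the extreme case $s=e+1$ the whole prefix $x[0\pp e]$ is a suffix of $x$ and there is \emph{no} mismatching letter at all, so your witness does not exist. The paper's proof makes exactly the case split you omit: if $s\le e$, then $x[e-s]\ne x[m-1-s]$ by maximality of $\suf[e]$ (a mismatch between arbitrary letters, in general not involving $b_1$) and $\condo(m-1-s,m-1-e)$ holds by its first clause; if $s=e+1$, then $m-1-s<m-1-e$ and $\condo$ holds by its second clause ($i<d$). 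Second, the strict inequalities of item 6 are precisely what you defer as ``the crux,'' and the mechanism you propose for them --- checking that the mismatching letter to the left of the reoccurring suffix differs from $a$ --- is the wrong one: that check only establishes that a shift is \emph{valid} (i.e., that $\condo$ holds), and your framing ``the witness $j=i$ contributes $m-1-i$'' can never yield anything better than the non-strict bound. Strictness comes from exhibiting a valid shift strictly smaller than $m-1-i$: for $e<i<r_3$, the shift $m-1-r_3$ of item 4 (align $a^{k_1}$ with the right end of the long run; mismatch $x[r_3-k_1]=a\ne b_1=x[m-1-k_1]$, which needs $k_3>k_1$) satisfies both conditions and $m-1-r_3<m-1-i$; for $\ell_3\le i<e$, the shift $k_1-\suf[i]$ of Lemma~\ref{lemma-smallrun} is valid and $k_1-\suf[i]<m-1-i$ because $\ell_3+k_1<m$. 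Without these two explicit shifts, and without the case analysis at $e$, your proposal establishes only the $\suf$ claims and weak ($\le$) versions of the $\bsuff$ claims.
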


\begin{proof}
\begin{enumerate}
\item
For $f<i\le r_3$, $x[i-k_1+1\pp i]=a^{k_1}=x[m-k_1 \pp m-1]$ is a suffix of $x$ and $x[i-k_1]=a\ne b = x[m-1-k_1]$
 thus $\suf[i] = k_1$;
\item
$x[f-k_1+1\pp f]=a^{k_1}=x[m-k_1\pp m-1]$ is a suffix of $x$
 thus $\suf[f] \ge k_1$;
\item
For $\ell_3 \le i < e$, $x[\ell_3\pp i]=a^{i-\ell_3+1}=x[m-i-\ell_3 \pp m-1]$ is a suffix of $x$ and $x[\ell_3-1]=b_4\ne a = x[m-i-\ell_3-1]$
 thus $\suf[i] = i-\ell_3+1$;
\item

if $k_3>k_1$ then $\suf[r_3] = k_1<k_3$.

$\conds(m-1-k_1,m-1-r_3)$ holds since $x[m-1-k_1-(m-1-r_3)+1 \pp m-1-(m-1-r_3)]=x[r_3-k_1+1 \pp r_3]=a^{k_1}$ is a suffix of $x$
 and $\condo((m-1-k_1,m-1-r_3)$ holds since $0< m-1-r_3<m-1-k_1$ and $x[m-1-k_1-(m-1-r_3)]=x[r_3-k_1]=a\ne b_1=x[m-1-k_1]$.
 
\item

Let $\suf[s]=\lsc(x[0\pp e],x) = s$.
Then either $s = f+1$ or $s \|e e$.
$\conds(m-1-s,m-1-e)$ holds since $x[m-1-s-(m-1-e)+1 \pp m-1-(m-1-e)]=x[f-s+1 \pp e]$ is a suffix of $x$

If $s \le e$ then $x[e-s] \ne x[m-1-s]$ and
 and $\condo((m-1-s,m-1-e)$ holds since $0< m-1-e\le m-1-s$ and $x[m-1-s-(m-1-e)]=x[e-s]\ne =x[m-1-s]$.
 
If $s=e+1$ then
 $\conds(m-1-s,m-1-e)$ holds since $m-1-s < m-1-e$.
 
In both cases $\bsuff[m-1-\suf[e]] \le m-1-e$.

\item

If $e<i<r_3$, $\suf[i]=k_1$ and
$\conds(m-1-k_1,m-1-r_3)$  and $\condo((m-1-k_1,m-1-r_3)$ hold
and $m-1-r_3<m-1-i$.

If $\ell_3 \le i < e$, $\suf[i] = i-\ell_3+1=k$.

$\conds(m-1-k,k_1-k)$ holds since $0<k_1-k\le m-k$ and $x[m-1-k-(k_1-k)+1\pp m-(k_1-k)-1]=x[m-k_1\pp m-k_1+k-1] = a^k$ is a suffix of $x$.
$\condo(m-1-k,k_1-k)$ holds since $0<k_1-k\le m-1-k$ and $x[m-1-k-(k_1-k)] = x[m-k_1-1] = b_1 \ne a = x[m-1-k]$
 and $k_1-k < m-1-i$ thus $\bsuff[m-1-k] < m-1-i$;
\end{enumerate}

\end{proof}

The next lemma shows that positions inside borders that do not correspond to end position of borders
 cannot contribute to new values in the $\bsuff$ table.

\begin{lemma}\label{lemma-border}
Let $x=uvu$ where $u,v\in\Sigma^*$ and $u$ is a border of $x$.
Then for $0\le i < |u|$ such that $\suf[i] \le i$ it holds that
 $\bsuff[m-1-\suf[i]] < m-1-i$.
\end{lemma}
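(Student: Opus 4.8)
The plan is to reduce the whole statement to exhibiting, for the index $i'':=m-1-\suf[i]$, a \emph{single} admissible shift $d$ with $d<m-1-i$ that satisfies both $\conds(i'',d)$ and $\condo(i'',d)$. Since $\bsuff[i'']$ is by definition the minimum over all admissible shifts, such a witness gives at once $\bsuff[m-1-\suf[i]]\le d<m-1-i$. Conceptually this says that the contribution $m-1-i$ that position $i$ would feed into $\bsuff[m-1-\suf[i]]$ through the relation $\bsuff[i']=\min\{m-1-j\mid m-1-\suf[j]=i'\}$ is never the minimizer.

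First I would extract what the hypothesis $\suf[i]\le i$ provides. Writing $k=\suf[i]$, the equality $\suf[i]=k$ means $x[i-k+1\pp i]=x[m-k\pp m-1]$, and $k\le i$ ensures the matched suffix does not reach the left end, so maximality of $\suf$ forces the mismatch $x[i-k]\ne x[m-1-k]$. Crucially, $\suf[i]\le i$ also forbids $i=|u|-1$: there $x[0\pp|u|-1]=u$ is a suffix of $x$, so $\suf[|u|-1]=|u|=i+1>i$. Hence every admissible $i$ satisfies $i\le|u|-2$, which is exactly the reading that $i$ lies strictly inside the border and is not the end of a border.

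The heart of the argument is to transport the occurrence of $z=x[m-k\pp m-1]$ sitting at $x[i-k+1\pp i]$ (inside the prefix copy of $u$) by $p:=m-|u|$ positions to the right, using the border equalities $x[p+t]=x[t]$ valid for $0\le t\le|u|-1$. This yields a second occurrence of $z$ ending at position $p+i$, preceded by $x[p+i-k]=x[i-k]\ne x[m-1-k]$. The associated shift is $d=(m-1)-(p+i)=|u|-1-i$. I would then check $1\le d<m-1-i$ (the lower bound is precisely where $i\le|u|-2$ is used, the upper bound follows from $|u|<m$) and verify $\conds(m-1-k,d)$ and $\condo(m-1-k,d)$ by substituting indices and invoking the same border equalities together with the mismatch; these verifications collapse to identities already in hand, so this part is routine.

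The main obstacle is not the computation but the degenerate cases. I must confirm $d>0$ \emph{strictly}: this is exactly what fails at the excluded end-of-border position and is the reason the hypothesis is stated as it is. I must also ensure the argument survives when $k=\suf[i]=0$, where $z$ is empty and $i''=m-1$; there $\conds$ holds trivially because the relevant factor is empty, while $\condo$ reduces to $x[i]\ne x[m-1]$, which is just the meaning of $\suf[i]=0$. Thus the same shift $d=|u|-1-i$ works and a single argument covers all $k\ge0$ uniformly.
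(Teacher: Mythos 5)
Your proof is correct and takes essentially the same route as the paper's: the paper also uses the witness shift $d=|u|-1-i$ (written there as $m-k-1$ with $k=i+m-|u|$), obtained by transporting the occurrence of $x[m-\suf[i]\pp m-1]$ ending at $i$ into the right copy of the border $u$, and verifies $\conds(m-1-\suf[i],d)$ and $\condo(m-1-\suf[i],d)$ via the border equalities and the mismatch $x[i-\suf[i]]\ne x[m-1-\suf[i]]$ coming from the definition of $\suf$. Your explicit justification of the degenerate points (that $\suf[i]\le i$ rules out $i=|u|-1$, hence $d\ge 1$, and that the case $\suf[i]=0$ goes through uniformly) is more careful than the paper, which simply asserts $0<m-k-1$, but the underlying argument is identical.
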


\begin{proof}
Let $k=i+m-|u|$
$\conds(m-1-\suf[i],m-k-1)$ holds since $0<m-k-1< m-\suff[i]-1$ and $x[m-1-\suff[i]-(m-k-1)+1\pp m-(m-k-1)-1]=x[k-\suf[i]+1\pp k]$ is a suffix of $x$.
$\condo(m-1-\suf[i],m-k-1)$ holds since $0<m-k-1\le m-1-k$ and $x[m-1-\suf[i]-(m-k-1)] = x[k-\suf[i]] = b_1 \ne a = x[m-1-\suf[i]]$
 by definition of $\suf$ and because $i$ and $k$ are the same positions in the two occurrences of $u$.
\end{proof}

Then the computation of the table $\bsuff$ can be done by considering runs of $a$'s from right to left.
The rightmost run of length $k_1$ is processed according to Lemma~\ref{lemma-lastrun}.
Subsequent runs of length strictly smaller than are processed according to Lemma~\ref{lemma-smallrun}:
 no position can contribute to a new value of the table $\bsuff$.
Subsequent runs of length greater than are processed according to Lemma~\ref{lemma-largerun}:
 at most two positions can contribute to new values of the table $\bsuff$.
Borders are processed according to Lemma~\ref{lemma-border}.
 
This new method can be done in linear time by considering the two variables $f$ and $g$ that avoid
 comparisons  or in quadratic time by performing all the comparisons for each one position of the runs
 that requires comparisons.

Figures~\ref{figu-algo-ft1goodsuffixes}, \ref{figu-algo-ft2goodsuffixes} and \ref{figu-algo-ft3goodsuffixes} give
 respectively the pseudo-codes for processing the rightmost run, the internal tuns until the longest
 border included and the runs that are located at the left of the longest border.
This corresponds to the linear method.
Its complexity analysis follows the complexity analysis of the \Algo{Suffixes} since it uses the same two 
 variables $f$ and $g$.
Lines~\ref{line-start-small-run} to~\ref{line-end-small-run} in~\figurename{}~\ref{figu-algo-ft2goodsuffixes}
 proceeds runs of $a$'s according to Lemma~\ref{lemma-smallrun} while
 lines~\ref{line-start-large-run} to~\ref{line-end-large-run} in~\figurename{}~\ref{figu-algo-ft2goodsuffixes}
 proceeds runs of $a$'s according to Lemma~\ref{lemma-largerun}.
It uses \Algo{Border} given in \figurename{}~\ref{figu-algo-ftborder} that proceeds according Lemma~\ref{lemma-border}.

Figures~\ref{figu-algo-ft1goodsuffixes} and~\ref{figu-algo-ft2bisgoodsuffixes} present the quadratic method
 that avoids the computation of the table $\suf$.
 
We also give a mixed version in \figurename~\ref{figu-algo-ft2tergoodsuffixes}.
It partially uses the results of the different lemmas of this section.
It processes the rightmost run of $a$'s and the borders as previously but do not analyse
 subsequent runs of $a$'s but is very similar to the classical computation for positions
 between the rightmost run of $a$'s and the longest border of the pattern.
 
\section{\label{sect-expe}Experimental results}

To evaluate the efficiency of the different methods presented above
we perform several
 experiments with different pattern lengths (from 2 to 1024) and different alphabet sizes
  (2, 4, 20 and 70).
 
\subsection{Algorithms}

We have tested five algorithms:
\begin{itemize}
\item
BF: a brute force computation of the $\bsuff$ table;
\item
CL: the classical method (Algorithm \Algo{Good-Suffixes} of \figurename{}~\ref{figu-algo-good-suffixes});
\item
FT1: the linear Fine Tuned method presented in this paper;
\item
FT2: the quadratic Fine Tuned method presented in this paper;
\item
FT1: the mixed Fine Tuned method presented in this paper.
\end{itemize}

These algorithms have been coded in C in an homogeneous way to keep the 
 comparison significant.
All codes have been compiled with \texttt{gcc} with the \texttt{-O3} optimization
 options.
The machine we used has an Intel Core i5 processor at 1.1GHz
 running macOS Big Sur Version 11.7.10.
The codes of the algorithms is available on \url{https://github.com/lecroq/goodsuff}.

\subsection{Results}

Tables~\ref{table-2} to~\ref{table-70} shows the running times of the different algorithms
 (fastest times are in bold face).
Figures~\ref{figu-alpha2} to~\ref{figu-alpha2} show graph of the results where the brute force method has been excluded
 since it is too slow.
Running times correspond to 10,000 runs on 1000 different strings pseudo-randomly generated, there given in ms.

These results show that the new methods are the fastest in almost all cases.
For small size alphabet (2 and 4) the quadratic version is the fastest
 while for large size alphabets (20 and 70) the mixed version is the fastest.

\section{\label{sect-conc}Conclusion}

In this article we presented methods for computing the $\bsuff$ table that is used
 for shifting the pattern in the classical Boyer-Moore exact string matching algorithm.
The new methods are based on the processing of runs of the last letter in the pattern.
However only a mixed method based on the new and the classical method reveals faster
 than the classical one.
It may be that possible tuning could be enable to find another mixed version faster
 than the one presented in this article.
It also gives a new methods for computing only the $\suf$ table.
Being based on analysis of runs, the new method should probably used for computing
 the $\bsuff$ table for a pattern given in a RLE (Run Length Encoding) form.

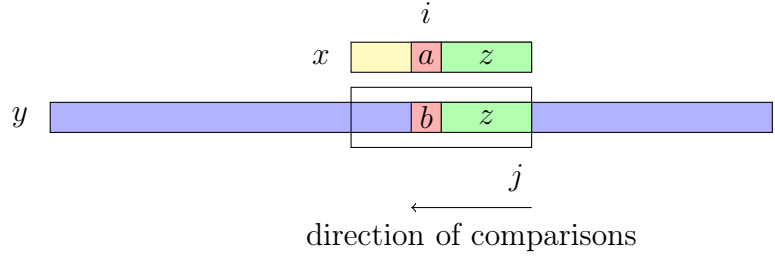
\begin{figure}
\begin{center}
\begin{tikzpicture}[scale=.8]
\drawColoredWord{5}{1}{3}{}{yellow}
\drawColoredWord{6}{1}{.5}{$a$}{red}
\drawColoredWord{6.5}{1}{1.5}{$z$}{green}
\drawColoredWord{0}{0}{12}{}{blue}
\drawColoredWord{6}{0}{.5}{$b$}{red}
\drawColoredWord{6.5}{0}{1.5}{$z$}{green}
\draw (5,-.25) rectangle (8,.75);
\draw (-.5,.25) node {$y$};
\draw (4.5,1.25) node {$x$};

\draw (6.25,2) node {$i$};
\draw (7.75,-.75) node {$j$};
\draw[<-] (6,-1.25) -- (8,-1.25);
\draw (7,-1.75) node {direction of comparisons};
\end{tikzpicture}
\end{center}
\caption{\label{figu-situ-bm}
Suffix $z = x[i+1\pp m-1]$ of $x$ is equal to factor $z = y[j-m+i+2\pp j]$ of $y$
 and symbol $a = x[i]$ is different from symbol $b = y[j-m+i+1]$.
}
\end{figure}

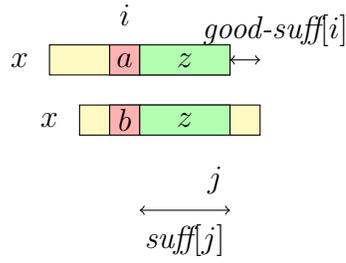
\begin{figure}
\begin{center}
\begin{tikzpicture}[scale=.8]
\drawColoredWord{0}{1}{3}{}{yellow}
\drawColoredWord{1}{1}{.5}{$a$}{red}
\drawColoredWord{1.5}{1}{1.5}{$z$}{green}
\drawColoredWord{0.5}{0}{3}{}{yellow}
\drawColoredWord{1}{0}{.5}{$b$}{red}
\drawColoredWord{1.5}{0}{1.5}{$z$}{green}
\draw (0,.25) node {$x$};
\draw (-.5,1.25) node {$x$};

\draw (1.25,2) node {$i$};
\draw (2.75,-.75) node {$j$};
\draw[<->] (1.5,-1.25) -- (3,-1.25);
\draw (2.25,-1.75) node {$\suf[j]$};

\draw[<->] (3,1.25) -- (3.5,1.25);
\draw (3.75,1.75) node {$\bsuff[i]$};
\end{tikzpicture}
\end{center}
\caption{\label{figu-rel-tables}
$\bsuff[i] = \min\{m-1-j \mid m-1 - \suf[j] = i\}$.
}
\end{figure}

\begin{figure}
\begin{center}
\begin{tikzpicture}[scale=.8]
\drawColoredWord{0}{1}{3}{}{yellow}
\drawColoredWord{1}{1}{.5}{$a$}{red}
\drawColoredWord{1.5}{1}{1.5}{$z$}{green}
\drawColoredWord{2}{0}{3}{}{yellow}
\drawColoredWord{2}{0}{1}{}{green}
\draw (-.5,1.25) node {$x$};
\draw (1.5,.25) node {$x$};

\draw (1.25,2) node {$i$};
\draw (2.75,-.75) node {$j$};
\draw[<->] (2,-1.25) -- (3,-1.25);
\draw (2.5,-1.75) node {$\suf[j]$};

\draw[<->] (3,1.25) -- (5,1.25);
\draw (4,1.75) node {$m-1-j$};
\end{tikzpicture}
\end{center}
\caption{\label{figu-rel-tables2}
$\bsuff[i] = m-j-1$.
}
\end{figure}

\begin{figure}
\begin{algo}{Suffixes}{x,m}
        \SET{g}{m-1}
        \SET{\suf[m-1]}{m}
        \DOFORI{i}{m-2}{0}
                \IF{i > g \mbox{ et } \suf[i+m-1-f] \ne i-g}
                        \SET{\suf[i]}{\min\{\suf[i+m-1-f], i-g\}}
                \ELSE
                        \SET{g}{\min\{i,g\}}
                        \SET{f}{i}
                        \DOWHILE{g \ge 0 \mbox{ et } x[g] = x[g+m+1-f]}
                                \SET{g}{g-1}
                        \OD
                        \SET{\suf[i]}{f-g}
                \FI
        \OD
        \RETURN{\suf}
\end{algo}
\caption{\label{figu-algo-suffixes}
\Call{Suffixes}{x,m} returns the table $\suf$ of string $x$ of length $m$.
}
\end{figure}
 
\begin{figure}
\begin{algo}{Good-Suffixes}{x,m,\suf}
        \SET{i}{0}
        \DOFORI{j}{m-2}{-1}
                \IF{j = -1 \mbox{ ou } \suf[j] = j+1}
                        \DOWHILE{i < m-1-j}
                                \SET{\bsuff[i]}{m-1-j}
                                \SET{i}{i+1}
                        \OD
                \FI
        \OD
        \DOFORI{j}{0}{m-2}
                \SET{\bsuff[m-1-\suf[j]]}{m-1-j}
        \OD
        \RETURN{\bsuff}
\end{algo}
\caption{\label{figu-algo-good-suffixes}
\Call{Good-Suffixes}{x,m,\suf} returns the table $\bsuff$ of string $x$ of length $m$
 given its table $\suf$.
}
\end{figure}

\begin{figure}
\begin{center}
\begin{tikzpicture}[scale=.8]
\drawColoredWord{0}{1}{8}{\small{$x_1$}}{yellow}
\drawColoredWord{8}{1}{.5}{\small{$b_1$}}{red}
\drawColoredWord{8.5}{1}{2}{\small{$a^{k_1}$}}{green}
\draw (-.5,1.25) node {$x$};

\draw (8.75,2.1) node {$\ell_1$};
\draw (10,2) node {$r_1$};
\draw (9.5,2.5) node {$i$};

\draw[<->] (8.5,0.75) -- (10.5,0.75);
\draw (9.5,0.4) node {$k_1$};
\draw[<->] (8.5,0) -- (9.5,0);
\draw (9,-0.25) node {$k$};

\end{tikzpicture}
\end{center}
\caption{\label{figu-lemma1}
$x$ ends with a run of $k_1$ a's.}
\end{figure}
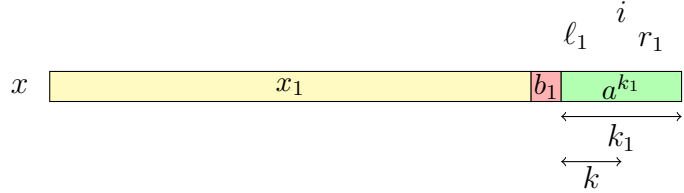

\begin{figure}
\begin{center}
\begin{tikzpicture}[scale=.8]
\drawColoredWord{0}{1}{3.5}{\small{$x_2$}}{yellow}
\drawColoredWord{3.5}{1}{.5}{\small{$b_2$}}{red}
\drawColoredWord{4}{1}{1.5}{\small{$a^{k_2}$}}{green}
\drawColoredWord{5.5}{1}{2.5}{\small{$x_3$}}{yellow}
\drawColoredWord{8}{1}{.5}{\small{$b_1$}}{red}
\drawColoredWord{8.5}{1}{2}{\small{$a^{k_1}$}}{green}
\draw (-.5,1.25) node {$x$};

\draw (4.25,2.1) node {$\ell_2$};
\draw (5.25,2) node {$r_2$};

\draw[<->] (4,0.75) -- (5.5,0.75);
\draw (4.75,0.25) node {$k_2$};

\draw[<->] (8.5,0.75) -- (10.5,0.75);
\draw (9.5,0.25) node {$k_1$};
\end{tikzpicture}
\end{center}
\caption{\label{figu-lemma2}
$x$ contains  a run of $a$'s of length $k_2 < k_1$.}
\end{figure}
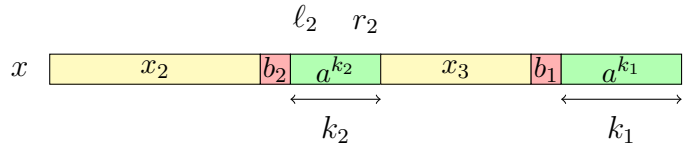

\begin{figure}
\begin{center}
\begin{tikzpicture}[scale=.8]
\drawColoredWord{0}{1}{2.5}{\small{$x_4$}}{yellow}
\drawColoredWord{2.5}{1}{.5}{\small{$b_4$}}{red}
\drawColoredWord{3}{1}{2.5}{\small{$a^{k_3}$}}{green}
\drawColoredWord{5.5}{1}{2.5}{\small{$x_5$}}{yellow}
\drawColoredWord{8}{1}{.5}{\small{$b_1$}}{red}
\drawColoredWord{8.5}{1}{2}{\small{$a^{k_1}$}}{green}
\draw (-.5,1.25) node {$x$};

\draw (3.25,2.1) node {$\ell_3$};
\draw (5.25,2) node {$r_3$};

\draw (4.75,2.8) node {$e$};
\draw[<->] (3,2.5) -- (5,2.5);
\draw[<->] (3,0.75) -- (5.5,0.75);
\draw (4.25,0.25) node {$k_3$};

\draw[<->] (8.5,0.75) -- (10.5,0.75);
\draw (9.5,0.25) node {$k_1$};
\end{tikzpicture}
\end{center}
\caption{\label{figu-lemma3}
$x$ contains  a run of $a$'s of length $k_3 \ge k_1$.}
\end{figure}

\begin{figure}
\begin{center}
\begin{small}
\setlength{\tabcolsep}{1.5pt}
\begin{tabular}{|l|c|c|c|c|c|c|c|c|c|c|c|c|c|c|c|c|c|c|c|c|c|c|c|c|c|c|}
\hline
$i$ & 0 & 1 & 2 & 3 & 4 & 5 & 6 & 7 & 8 & 9 & 10 & 11 & 12 & 13 & 14 & 15 & 16 & 17 & 18 & 19 & 20 & 21 & 22 & 23 & 24 & 25\\
\hline
$x[i]$   &\sa{a}&\sa{a}&\sa{b}&\sa{b}&\sa{a}&\sa{a}&\sa{a}&\sa{a}&\sa{b}&\sa{b}&\sa{a}&\sa{a}&\sa{a}&\sa{a}&\sa{a}&\sa{b}&\sa{b}&\sa{a}&\sa{a}&\sa{a}&\sa{b}&\sa{b}&\sa{a}&\sa{a}&\sa{a}&\sa{a}\\
\hline
$\suf[i]$&1&2&0&0&1&2&3&8&0&0&1&2&3&9&4&0&0&1&2&3&0&0&1&2&3&26\\
\hline
first&&&&&&&&&&&&&&&&&&&&&&&&&&\\
loop&&&&&&&&&&&&&&&&&&&&&&&&&&\\
$j=7$&\textbf{18}&\textbf{18}&\textbf{18}&\textbf{18}&\textbf{18}&\textbf{18}&\textbf{18}&\textbf{18}&\textbf{18}&\textbf{18}&\textbf{18}&\textbf{18}&\textbf{18}&\textbf{18}&\textbf{18}&\textbf{18}&18&18&&&&&&&&\\
$j=1$&&&&&&&&&&&&&&&&&&&\textbf{24}&\textbf{24}&\textbf{24}&24&24&24&&\\
$j=0$&&&&&&&&&&&&&&&&&&&&&&&&&25&\\
\hline
second&&&&&&&&&&&&&&&&&&&&&&&&&&\\
loop&&&&&&&&&&&&&&&&&&&&&&&&&&\\
$j=-1$&&&&&&&&&&&&&&&&&&&&&&&&&&26\\
$j=0$&&&&&&&&&&&&&&&&&&&&&&&&&25&\\
$j=1$&&&&&&&&&&&&&&&&&&&&&&&&24&&\\
$j=2$&&&&&&&&&&&&&&&&&&&&&&&&&&23\\
$j=3$&&&&&&&&&&&&&&&&&&&&&&&&&&22\\
$j=4$&&&&&&&&&&&&&&&&&&&&&&&&&21&\\
$j=5$&&&&&&&&&&&&&&&&&&&&&&&&20&&\\
$j=6$&&&&&&&&&&&&&&&&&&&&&&&19&&&\\
$j=7$&&&&&&&&&&&&&&&&&&\textbf{18}&&&&&&&&\\
$j=8$&&&&&&&&&&&&&&&&&&&&&&&&&&17\\
$j=9$&&&&&&&&&&&&&&&&&&&&&&&&&&16\\
$j=10$&&&&&&&&&&&&&&&&&&&&&&&&&15&\\
$j=11$&&&&&&&&&&&&&&&&&&&&&&&&14&&\\
$j=12$&&&&&&&&&&&&&&&&&&&&&&&13&&&\\
$j=13$&&&&&&&&&&&&&&&&&\textbf{12}&&&&&&&&&\\
$j=14$&&&&&&&&&&&&&&&&&&&&&&\textbf{11}&&&&\\
$j=15$&&&&&&&&&&&&&&&&&&&&&&&&&&10\\
$j=16$&&&&&&&&&&&&&&&&&&&&&&&&&&9\\
$j=17$&&&&&&&&&&&&&&&&&&&&&&&&&8&\\
$j=18$&&&&&&&&&&&&&&&&&&&&&&&&7&&\\
$j=19$&&&&&&&&&&&&&&&&&&&&&&&6&&&\\
$j=20$&&&&&&&&&&&&&&&&&&&&&&&&&&5\\
$j=21$&&&&&&&&&&&&&&&&&&&&&&&&&&\textbf{4}\\
$j=22$&&&&&&&&&&&&&&&&&&&&&&&&&\textbf{3}&\\
$j=23$&&&&&&&&&&&&&&&&&&&&&&&&\textbf{2}&&\\
$j=24$&&&&&&&&&&&&&&&&&&&&&&&\textbf{1}&&&\\
\hline
$\bsuff[i]$&18&18&18&18&18&18&18&18&18&18&18&18&18&18&18&18&12&18&24&24&24&11&1&2&3&4\\
\hline
\end{tabular}
\end{small}
\end{center}
\caption{\label{figu-example1}
Computation of table $\bsuff$ for 
 $x = \sa{aabbaaaabbaaaaabbaaabbaaaa}$
 with the help of table $\suf$.
 The first and second loops refer respectively to the first and the second \textbf{for} loops of Algorithm \Algo{Good-Suffixes}.
Final values are highlighted in bold face.
}
\end{figure}

\begin{figure}
\begin{algo}{FinedTunedGoodSuffix-PART-I}{x,m}
  \SET{i}{m-2}
  \DOWHILE{i \geq 0 \mbox{ and } x[i] = x[m-1]}
    \SET{i}{i-1}
  \OD
  \COM{According to Lemma~\ref{lemma-lastrun}}
  \SET{(\bsuff[m-1],k_1)}{(m-1-i,m-1-i)}
  \SET{(j,k)}{(m-2,k_1-1)}
  \DOWHILE{j>i}
    \SET{(\suf[j],\bsuff[j])}{(k,k)}
    \SET{(j,k)}{(j-1,k-1)}
  \OD
  \DOFORI{j}{0}{i}
    \SET{\bsuff[j]}{m}
  \OD
\end{algo}
\caption{\label{figu-algo-ft1goodsuffixes}
\Call{FineTunedGoodSuffix-PART-I}{x,m} processes the rightmost run of $a$'s of string $x$ of length $m$.
}
\end{figure}

\begin{figure}
\begin{footnotesize}
\begin{algo}{FineTunedGoodSuffix-PART-II}{x,m}
  \SET{(g,j)}{(m,0)}
  \DOWHILE{\mbox{\texttt{true}}}
   \SET{i}{\mbox{position of the next $a$ or $-1$}}
   \IF{i < 0}
      \RETURN{}
    \FI
    \SET{r}{i}
    \SET{i}{\mbox{position of the next letter $\ne a$ or $-1$}}
    \IF{i < 0}
       \CALL{Border}{\min\{k_1,r+1\}}
       \IF{r \ge k_1}
          \SET{\bsuff[m-1-k_1]}{\min\{\bsuff[m-1-k_1],m-1-r\}}
        \FI\label{line-end-border1}
        \RETURN{}
    \FI
    \SET{k_2}{r-i}
    \IF{k_2 < k_1}
      \SET{(h,k)}{(r,k_2)}\label{line-start-small-run}
      \DOWHILE{h > e}
        \SET{\suf[h]}{k}
        \SET{(h,k)}{(h-1,k-1)}
      \OD
      \ACT{\textbf{continue}}\label{line-end-small-run}
    \FI
    \SET{e}{i+k_1}\label{line-end-large-run}
    \DOFORI{h}{r}{e+1}
      \SET{\suf[h]}{k_1}
    \OD
    \IF{k_2>k_1}
      \SET{\bsuff[m-1-k_1]}{\min\{\bsuff[m-1-k_1],m-1-r\}\}}
    \FI
    \SET{(h,k)}{(e-1,k_1-1)}
    \DOWHILE{h > i}
      \SET{\suf[h]}{k}
      \SET{(h,k)}{(h-1,k-1)}
    \OD
    \IF{g < e \mbox{ and } \suf[e+m-1-f] \ne e - g}
      \IF{\suf[e+m-1-f] < e-g}
        \SET{\suf[e]}{\suf[e+m-1-f]}
      \ELSE
        \SET{\suf[e]}{e-g}
        \SET{\bsuff[m-1-\suf[e]}{\min\{\bsuff[m-1-\suf[e]],m-1-e\}}
      \FI
      \SET{(f,g)}{(e,i)}
      \DOWHILE{g \ge 0 \mbox{ and } x[g]=x[m-1-f+g]}
        \SET{g}{g-1}
      \OD
      \SET{\suff[f]}{f-g}\label{line-start-large-run}
      \IF{g < 0}
        \CALL{Border}{f}
        \SET{(i,r)}{(i-1,f)}
        \BREAK
      \ELSE
        \SET{\bsuff[m-1-\suf[f]]}{\min\{\bsuff[m-1-\suf[f]], m-1-f\}}
      \FI
    \FI
  \OD
\end{algo}
\end{footnotesize}
\caption{\label{figu-algo-ft2goodsuffixes}
\Call{FineTunedGoodSuffix-PART-II}{x,m} processes internal riuns of $a$'s of string $x$ of length $m$.
}
\end{figure}

\begin{figure}
\begin{algo}{FineTunedGoodSuffix-PART-III}{x,m}
  \COM{According to Lemma~\ref{lemma-border}}
  \DOWHILE{i \ge 0}
    \IF{x[i] = x[m-1]}
      \IF{i+1 \le \suf[i+m-1-r]}
        \SET{\suf[i]}{i+1}
        \DOWHILE{j < m-1-i}
          \SET{\bsuff[j]}{\min\{\bsuff[j], m-1-i\}}
          \SET{j}{j+1}
        \OD
      \ELSE
        \SET{\suf[i]}{i+m-1-r}
      \FI
    \FI
    \SET{i}{i-1}
  \OD
\end{algo}
\caption{\label{figu-algo-ft3goodsuffixes}
\Call{FineTunedGoodSuffix-PART-III}{x,m} processes remaining borders of string $x$ of length $m$.
}
\end{figure}

\begin{figure}
\begin{algo}{Border}{v}
        \DOWHILE{j \le m-1-v}
          \SET{\bsuff[j]}{\min\{\bsuff[j],m-1-v\}\}}
          \SET{j}{j+1}
        \OD
\end{algo}
\caption{\label{figu-algo-ftborder}
\Call{Boder}{x,m} processes the border of length $v+1$ of string $x$ of length $m$.
}
\end{figure}

\begin{figure}
\begin{footnotesize}
\begin{algo}{FineTuneGoodSuffix-PART-II-bis}{x,m}
  \SET{(g,j)}{(m,0)}
  \DOWHILE{\mbox{\texttt{true}}}
   \SET{i}{\mbox{position of the next $a$ or $-1$}}
   \IF{i < 0}
      \RETURN{}
    \FI
    \SET{r}{i}
    \SET{i}{\mbox{position of the next letter $\ne a$ or $-1$}}
    \IF{i < 0}
       \CALL{Border}{\min\{k_1,r+1\}}
       \IF{r \ge k_1}
          \SET{\bsuff[m-1-k_1]}{\min\{\bsuff[m-1-k_1],m-1-r\}}
        \FI
        \RETURN{}
    \FI
    \SET{k_2}{r-i}
    \IF{k_2 < k_1}
      \ACT{\textbf{continue}}
    \FI
    \SET{e}{i+k_1}
    \IF{k_2>k_1}
      \SET{\bsuff[m-1-k_1]}{\min\{\bsuff[m-1-k_1],m-1-r\}\}}
    \FI
   \SET{(f,g)}{(e,i)}
   \DOWHILE{g \ge 0 \mbox{ and } x[g]=x[m-1-f+g]}
      \SET{g}{g-1}
    \OD
    \IF{g < 0}
      \CALL{Border}{f}
    \ELSE
      \SET{\bsuff[m-1-f+g]}{\min\{\bsuff[m-1-f+g], m-1-f\}}
    \FI
  \OD
\end{algo}
\end{footnotesize}
\caption{\label{figu-algo-ft2bisgoodsuffixes}
\Call{FineTuneGoodSuffix-PART-II-bis}{x,m} returns the table $\bsuff$ of string $x$ of length $m$
 without computing the table $\suf$.
}
\end{figure}

\begin{figure}
\begin{footnotesize}
\begin{algo}{FineTuneGoodSuffix-PART-II-ter}{x,m}
 \SET{i}{\mbox{position of the next $a$ or $-1$}}
 \IF{i < 0}
    \RETURN{}
  \FI
  \SET{(f,g)}{(i,i-1)}
  \DOWHILE{g \ge 0 \mbox{ and } x[g]=x[m-1-f+g]}
    \SET{g}{g-1}
  \OD
  \SET{\suff[f]}{f-g}
  \IF{g < 0}
    \CALL{Border}{f}
    \SET{i}{i-1}
    \ACT{\textbf{goto } lastStep}
  \ELSE
    \SET{\bsuff[m-1-\suf[i]]}{\min\{\bsuff[m-1-\suf[i]], m-1-i\}}
  \FI

  \DECR{i}
  \DOWHILE{i \ge 0}
    \DECR{i}
    \IF{x[i] = x[m-1]}
      \IF{g < i \mbox{ and } \suf[e+m-1-i] \ne i - g}
        \IF{\suf[e+m-1-i] < i-g}
          \SET{\suf[i]}{\suf[i+m-1-f]}
        \ELSE
          \SET{\suf[i]}{i-g}
          \SET{\bsuff[m-1-\suf[i]}{\min\{\bsuff[m-1-\suf[i]],m-1-i\}}
        \FI
      \ELSE
        \SET{(f,g)}{(i,\min\{g,i-1\})}
        \DOWHILE{g \ge 0 \mbox{ and } x[g]=x[m-1-f+g]}
          \SET{g}{g-1}
        \OD
        \SET{\suff[i]}{f-g}
        \IF{g < 0}
          \CALL{Border}{i}
          \DECR{i}
          \BREAK
        \ELSE
          \SET{\bsuff[m-1-\suf[i]]}{\min\{\bsuff[m-1-\suf[i]], m-1-i\}}
        \FI
      \FI
    \FI
    \DECR{i}
  \OD
  \ACT{lastStep:}
  \DOWHILE{i \ge 0}
    \IF{x[i] = x[m-1]}
      \IF{i+1 \le \suf[i+m-1-f]}
        \SET{\suf[i]}{i+1}
        \CALL{Border}{i}
      \ELSE
        \SET{\suf[i]}{i+m-1-f}
      \FI
    \FI
    \SET{i}{i-1}
  \OD
\end{algo}
\end{footnotesize}
\caption{\label{figu-algo-ft2tergoodsuffixes}
\Call{FineTuneGoodSuffix-PART-II-ter}{x,m} computes the table $\bsuff$ of string $x$ of length $m$
 using a mixed strategy.
}
\end{figure}

\begin{table}
\caption{\label{table-2} 
Experimental results for an alphabet of size
$\sigma = 2$.
}
\begin{center}
\begin{tabular}{|l|c|c|c|c|c|}
\hline
$m$ & BF & CL & FT1 & FT2 & FT3\\
\hline
2 & 0.008745 & 0.008731 & 0.008772 & 0.008589 & \textbf{0.008573}\\
4 & 0.005412 & 0.005586 & 0.005482 & \textbf{0.005218} & 0.005512\\
8 & 0.007156 & 0.006581 & 0.006152 & \textbf{0.005743} & 0.005952\\
16 & 0.013612 & 0.006990 & 0.007355 & \textbf{0.006785} & 0.007183\\
32 & 0.033180 & 0.008698 & 0.009581 & \textbf{0.008199} & 0.009244\\
64 & 0.111221 & 0.011614 & 0.013312 & \textbf{0.010890} & 0.012055\\
128 & 0.434411 & 0.017598 & 0.020186 & \textbf{0.015496} & 0.017973\\
256 & 1.712719 & 0.028663 & 0.032937 & \textbf{0.024166} & 0.029018\\
512 & 6.929127 & 0.050306 & 0.057030 & \textbf{0.041220} & 0.049889\\
1024 & 27.308697 & 0.090039 & 0.099594 & \textbf{0.071921} & 0.088235\\
\hline
\end{tabular}
\end{center}
\end{table}

\begin{table}
\caption{\label{table-4} 
Experimental results for an alphabet of size
$\sigma = 4$.
}
\begin{center}
\begin{tabular}{|l|c|c|c|c|c|c|}
\hline
$m$ & BF & CL & FT1 & FT2 & FT3\\
\hline
2 & 0.005104 & 0.005034 & 0.005107 & \textbf{0.004976} & 0.004999\\
4 & 0.005784 & 0.005678 & 0.005702 & 0.005482 & \textbf{0.005321}\\
8 & 0.006879 & 0.005888 & 0.005803 & \textbf{0.005748} & 0.005764\\
16 & 0.012003 & 0.006450 & 0.006716 & \textbf{0.006296} & 0.006322\\
32 & 0.028193 & 0.008150 & 0.008162 & \textbf{0.007080} & 0.007388\\
64 & 0.088753 & 0.009949 & 0.010225 & \textbf{0.008603} & 0.008775\\
128 & 0.324062 & 0.014428 & 0.014754 & \textbf{0.012020} & 0.012106\\
256 & 1.257153 & 0.023508 & 0.023479 & \textbf{0.018005} & 0.018597\\
512 & 5.032720 & 0.041383 & 0.040230 & \textbf{0.029101} & 0.030979\\
1024 & 20.586937 & 0.075417 & 0.070922 & \textbf{0.051622} & 0.056580\\
\hline
\end{tabular}
\end{center}
\end{table}

\begin{table}
\caption{\label{table-20}
Experimental results for an alphabet of size
$\sigma = 20$.
}
\begin{center}
\begin{tabular}{|l|c|c|c|c|c|}
\hline
$m$ & BF & CL & FT1 & FT2 & FT3\\
\hline
2 & 0.005233 & 0.005077 & 0.005256 & \textbf{0.005024} & 0.005093\\
4 & 0.005434 & 0.005361 & 0.005361 & \textbf{0.005259} & 0.005276\\
8 & 0.006379 & 0.005757 & 0.005432 & \textbf{0.005250} & 0.005259\\
16 & 0.009964 & 0.005923 & 0.005603 & \textbf{0.005406} & 0.005436\\
32 & 0.025321 & 0.006496 & 0.006112 & 0.006406 & \textbf{0.005582}\\
64 & 0.084819 & 0.007816 & 0.007029 & 0.006477 & \textbf{0.006461}\\
128 & 0.316767 & 0.010470 & 0.008664 & \textbf{0.007432} & 0.007606\\
256 & 1.230335 & 0.015584 & 0.012013 & 0.010109 & \textbf{0.009604}\\
512 & 4.902985 & 0.025234 & 0.018403 & 0.014375 & \textbf{0.013772}\\
1024 & 20.015751 & 0.044632 & 0.030326 & 0.023298 & \textbf{0.022046}\\
\hline
\end{tabular}
\end{center}
\end{table}

\begin{table}
\caption{\label{table-70}
Experimental results for an alphabet of size
$\sigma = 70$.
}
\begin{center}
\begin{tabular}{|l|c|c|c|c|c|}
\hline
$m$ & BF & CL & FT1 & FT2 & FT3\\
\hline
2 & 0.005144 & 0.005125 & 0.005142 & 0.005057 & \textbf{0.004933}\\
4 & 0.005428 & 0.005364 & 0.005449 & 0.005239 & \textbf{0.005205}\\
8 & 0.006245 & 0.005771 & 0.005371 & 0.005772 & \textbf{0.005400}\\
16 & 0.009442 & 0.005767 & 0.005626 & \textbf{0.005400} & 0.005407\\
32 & 0.023653 & 0.006466 & 0.005857 & \textbf{0.005697} & 0.005860\\
64 & 0.078086 & 0.007575 & 0.006712 & 0.006249 & \textbf{0.005986}\\
128 & 0.289294 & 0.009746 & 0.007809 & 0.007066 & \textbf{0.006884}\\
256 & 1.136308 & 0.014193 & 0.010026 & 0.008646 & \textbf{0.008529}\\
512 & 4.533947 & 0.022483 & 0.014530 & 0.012153 & \textbf{0.011148}\\
1024 & 18.628668 & 0.041737 & 0.024159 & 0.019512 & \textbf{0.017977}\\
\hline
\end{tabular}
\end{center}
\end{table}

\begin{figure}
\begin{center}
\includegraphics[angle=-90,width=12cm]{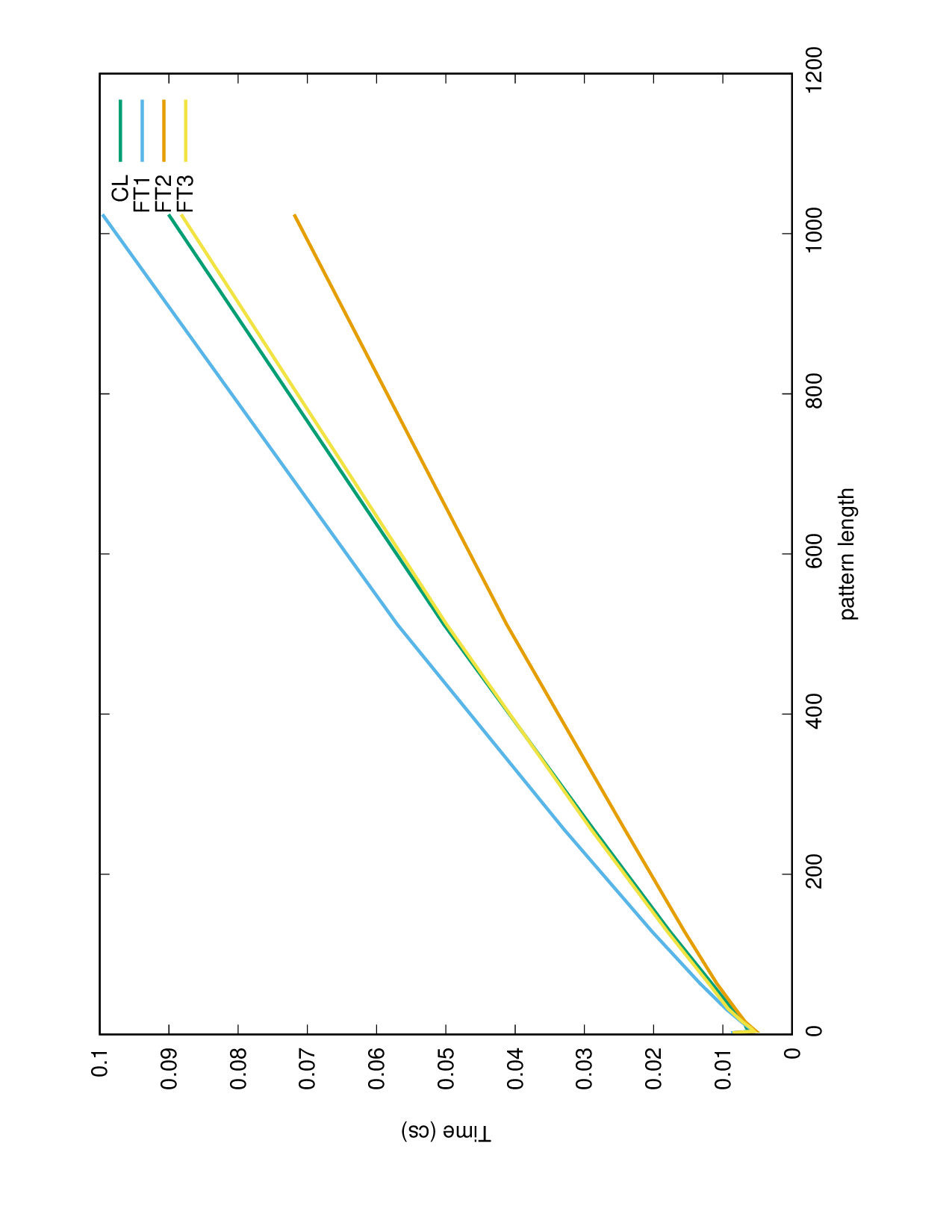}
\end{center}
\caption{\label{figu-alpha2}Experimental results an alphabet of size $\sigma=2$.}
\end{figure}

\begin{figure}
\begin{center}
\includegraphics[angle=-90,width=12cm]{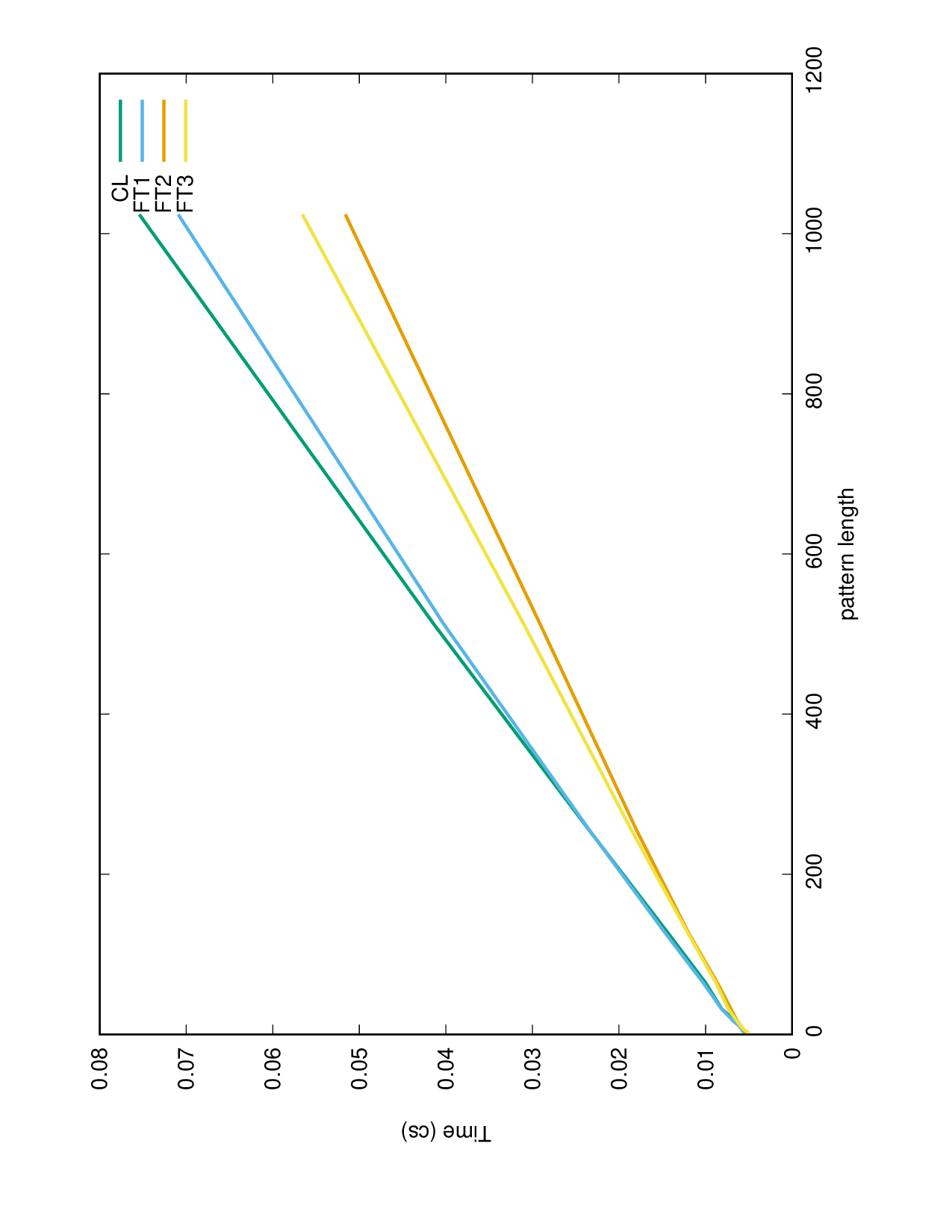}
\end{center}
\caption{\label{figu-alpha4}Experimental results an alphabet of size $\sigma=4$.}
\end{figure}

\begin{figure}
\begin{center}
\includegraphics[angle=-90,width=12cm]{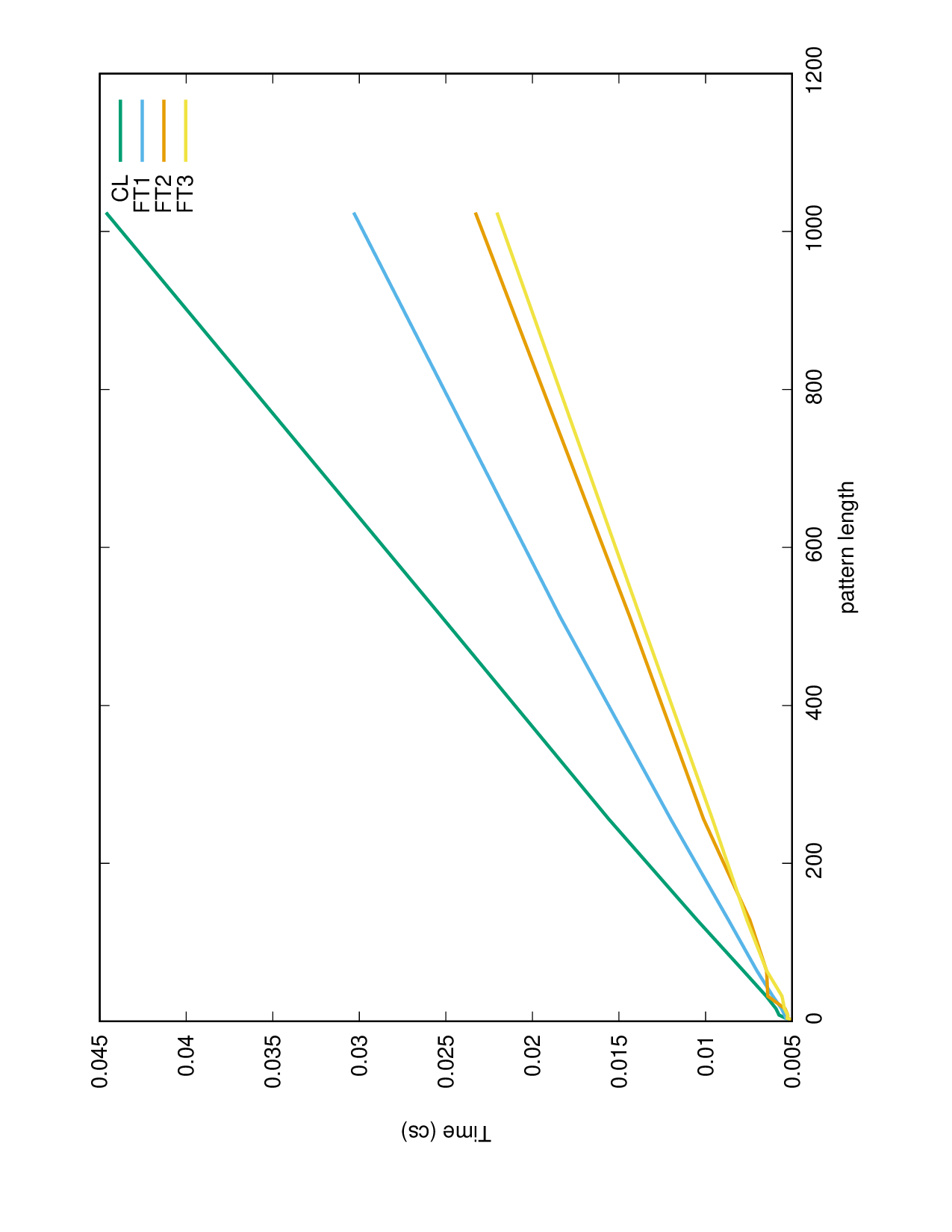}
\end{center}
\caption{\label{figu-alpha20}Experimental results an alphabet of size $\sigma=20$.}
\end{figure}

\begin{figure}
\begin{center}
\includegraphics[angle=-90,width=12cm]{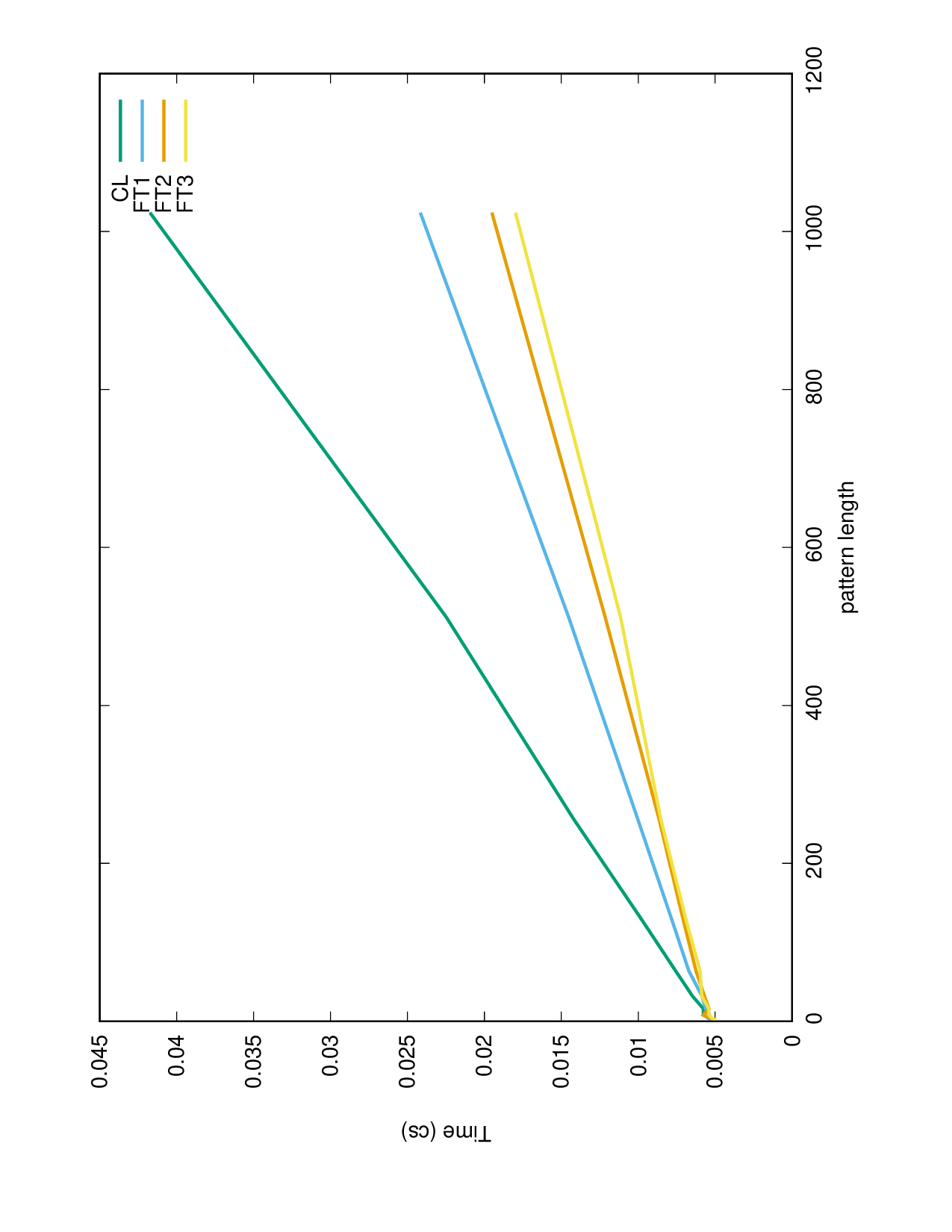}
\end{center}
\caption{\label{figu-alpha70}Experimental results an alphabet of size $\sigma=70$.}
\end{figure}

\bibliographystyle{abbrv}
\bibliography{main}

\end{document}